\newcommand\StateX{\Statex\hspace{\algorithmicindent}}
\newcommand\StateXX{\StateX\hspace{\algorithmicindent}}
\algrenewcommand\textproc{}
\title{DAGs for the Masses}
\author{Michael Anoprenko}
\affiliation{
  \institution{LTCI, T\'el\'ecom Paris, Institut Polytechnique de Paris}
  \country{France}
}
\email{manoprenko@gmail.com}
\author{Andrei Tonkikh}
\affiliation{
  \institution{Aptos Labs}
  \country{USA}
}
\email{andrei@aptoslabs.com}
\author{Alexander Spiegelman}
\affiliation{
  \institution{Aptos Labs}
  \country{USA}
}
\email{sasha@aptoslabs.com}
\author{Petr Kuznetsov}
\affiliation{
  \institution{LTCI, T\'el\'ecom Paris, Institut Polytechnique de Paris}
  \country{France}
}
\email{petr.kuznetsov@telecom-paris.fr}
\author{Anatoliy Zinovyev}
\affiliation{
  \institution{Boston University}
  \country{USA}
}
\email{tolik@bu.edu}
\author{Konstantin Shprenger}
\affiliation{
  \institution{Institut Polytechnique de Paris}
  \country{France}
}
\email{shprenger@ip-paris.fr}
\begin{document}

\begin{abstract}
A recent approach to building consensus protocols on top of Directed Acyclic Graphs (DAGs) shows much promise due to its simplicity and stable throughput.
However, as each node in the DAG typically includes a \emph{linear} number of references to the nodes in the previous round, prior DAG protocols only scale up to a certain point when the overhead of maintaining the graph becomes the bottleneck.

To enable large-scale deployments of DAG-based protocols, we propose a \emph{sparse DAG} architecture, where each node includes only a \emph{constant} number of references to random nodes in the previous round.
We present a \emph{sparse} version of Bullshark---one of the most prominent DAG-based consensus protocols---and demonstrate its improved scalability.

Remarkably, unlike other protocols that use random sampling to reduce communication complexity, we manage to avoid sacrificing resilience: the protocol can tolerate up to $f<n/3$ Byzantine faults (where $n$ is the number of participants), same as its less scalable deterministic counterpart.
The proposed ``sparse'' methodology can be applied to any protocol that maintains disseminated system updates and causal relations between them in a graph-like structure.
Our simulations show that the considerable reduction of transmitted metadata in sparse DAGs results in more efficient network utilization and better scalability.

\end{abstract}




\maketitle

\newpage\setcounter{page}{1}

\section{Introduction}

One of the principal questions in modern computing is how to make sure that a large number of users can share data in a consistent, available, and efficient way.
In blockchain systems~\cite{bitcoin,ethereum}, the users solve this problem by repeatedly reaching \emph{consensus} on an ever-growing sequence of data blocks (the problem formally stated as \emph{Byzantine atomic broadcast}, BAB).
Consensus is a notoriously difficult task~\cite{flp,dls,pbft,CHT96} that combines disseminating data blocks among the users and making sure that the blocks are placed by all users in the same order.
A popular trend of DAG-based consensus (\cite{aleph,dagrider,bullshark,shoal,shoalplusplus,mysticeti-tr,sui}, to name a few) is to separate data dissemination from the ordering logic.
In these protocols, all users disseminate blocks, and each block references a number of \emph{parents}---previously disseminated blocks---thus forming a directed acyclic graph (DAG).
The key observation is that by locally interpreting snapshots of this ever-growing graph, the users can infer a consistent ordering of the blocks.
Compared to more traditional, leader-based solutions such as PBFT~\cite{pbft}, this approach is simpler and exhibits a more stable throughput.
Moreover, as all users share the load of data dissemination, it naturally scales to a larger number of participants.

The original DAG-based BAB protocols are, however, very heavy on the communication channels, which inherently limits their scalability.
In this paper, we take a \emph{probabilistic} approach to build a slimmer DAG structure, where each block carries \emph{quasi-constant} amount of data about its predecessors, which results in a much less bandwidth-demanding solution.
We show that the reduced complexity comes with the cost of a very small probability of consistency violation and a slight latency increase.

\paragraph*{DAG consensus}

DAG-based consensus protocols proceed in (asynchronous) rounds.
In each round, each \emph{validator} (a user involved in maintaining the DAG) creates a node (typically containing a block of transactions submitted by the blockchain users)  and disseminates it using reliable broadcast~\cite{bra87asynchronous,cachin2011introduction}.
The node contains references to $n - f$ nodes from the previous round, where $n$ is the total number of validators and $f$ is the maximum number of faulty validators (typically, $f = \lfloor \frac{n - 1}{3} \rfloor$).
In every round, a dedicated node is chosen as an \emph{anchor}, where the choice can be either made deterministically based on the round number or at random using a common coin.
The goal of consensus is now to place the anchors in an ordered sequence.
In creating the sequence, the DAG nodes referencing the anchor as their parent are interpreted as its \emph{votes}.
Once the sequence of anchors is built, all DAG nodes in an anchor's causal past are placed before it in some deterministic way.

\paragraph*{The burden of metadata}

\begin{table}
\begin{center}
\begin{tabular}{|c|c|c|c|c|}
\hline
& \multicolumn{2}{c|}{Bullshark} & \multicolumn{2}{c|}{Sparse Bullshark} \\
\cline{2-5}
& complexity & ex. egress traffic & complexity & ex. egress traffic \\
& &  & &  \\
\hline
Threshold signatures & $\Theta(n^2 \lambda)$ & 171 MB & $\Theta(n \lambda^2)$ & 17 MB \\
Multi-signatures     & $\Theta(n^3)$         & 837 MB & $\Theta(n^2 \lambda)$     & 81 MB \\
Regular signatures   & $\Theta(n^3 \lambda)$ & 171 GB & $\Theta(n^2 \lambda^2)$   & 16 GB \\
\hline
\end{tabular}
\end{center}
\caption{Communication complexity of Bullshark and Sparse Bullshark for different cryptographic setups and the estimated amount of egress traffic per-user per-round in the case when $n=2\,000$ and $\lambda=128$.}
\label{tab:comparison}
\end{table}

The central claim for DAG-based protocols is that they are able to amortize the space taken by the graph edges (\emph{metadata}) provided enough user transactions (\emph{payload}): If each block includes $\Omega(n)$ user transactions, the metadata will take at most a constant fraction of the block space.
However, while alluring in theory, this claim does not stand the test of reality where we cannot simply keep increasing the block size indefinitely due to limited resources (network, storage, and CPU) and a limited demand.

As each user needs to receive, process, and store each other user's block in each round, and each block includes $n-f$ references, per-user per-round communication, computational, and storage complexity is at least $\Omega(n^2)$ multiplied by the size of a single reference.
In practice, in most protocols~\cite{dagrider,narwhal,bullshark,shoal,shoalplusplus}, each reference includes either $n-f$ signatures, a multi-signature~\cite{multisignatures,bls-multisignatures} with a bit-vector of size $n$, or a threshold signature~\cite{threshold-cryptosystems,bls-threshold,das2023threshold}, which brings the per-user per-round complexity to $\Omega(n^3 \lambda)$, $\Omega(n^3)$, or $\Omega(n^2 \lambda)$ respectively, depending on the cryptographic primitives available (see \Cref{tab:comparison}), where $\lambda$ is the security parameter.\footnote{Intuitively, one can think of $\lambda$ as the number of ``bits of security'', i.e., as the logarithm of the number of operations required to break the cryptosystem. Most hash functions and digital signature schemes have output of size $O(\lambda)$. However, multi-signatures require additional $O(n)$ bits to store the set of nodes who contributed to the signature. In this paper, we assume $n \gg \lambda$.}

This issue is a significant barrier not only for effective resource utilization, but also for the scalability of the system: modern production blockchain deployments operate with around $100$ validators.

\paragraph*{Our contribution}

We introduce a technique for reducing the number of edges in the DAG by randomly sampling parents of each DAG vertex from a quorum of vertices in the previous round, leaving only $O(\lambda)$ outgoing edges from each vertex.
The communication complexity is thus brought down to $O(n^2 \lambda^2)$ with plain signatures, $O(n^2 \lambda)$ with multi-signatures, or $O(n \lambda^2)$ with threshold signatures.
To anticipate the adversary populating a sample with Byzantine nodes only, we use \emph{verifiable sampling}. 
We equip the sample of a vertex with a \emph{proof} that it has indeed been fairly selected from a sufficiently large ($\geq 2f+1$) set of the parent vertices.
We show that the probability that a computationally bounded adversary is able to fabricate such a proof for a sample of its choice is negligible.

Sparse sampling significantly lowers the overhead of DAG-based protocols, enabling larger-scale deployments.
We present \emph{Sparse Bullshark}, a variant of Bullshark~\cite{bullshark}---a state-of-the-art DAG-based consensus protocol, that exhibits reduced communication complexity and achieves better scalability.
In \Cref{tab:comparison}, we evaluate the communication complexity and egress per-node per-round traffic of Sparse Bullshark, compared with the original Bullshark protocol.

Unlike prior solutions to scaling distributed systems with random sampling (\cite{gilad2017algorand,coinicidence,abraham2025asynchronous,blum2020asynchronous}, to name a few), our approach maintains the optimal resilience and tolerates up to $f<n/3$ Byzantine faults.

Our simulations show that Sparse Bullshark allows for deployment with thousands of validators with realistic network bandwidth assumptions.

\paragraph*{Roadmap}
In \Cref{sec:related}, we overview related work.
In~\Cref{sec:model}, we briefly describe our system model and recall the BAB specification.
In~\Cref{sec:protocol}, we present our protocol, highlight its key differences with earlier work.
In~\Cref{sec:analysis}, we argue its correctness and review its communication complexity.
In~\Cref{sec:eval}, we present the outcomes of our simulations.
\Cref{sec:conc} concludes the paper.
Proofs of correctness are delegated to the appendix. 

\section{Related work}
\label{sec:related}

Byzantine atomic broadcast (BAB), explicitly defined by Cachin et al.~\cite{bab} is equivalent to Byzantine fault-tolerant state-machine replication (SMR)~\cite{pbft}, extending original crash fault-tolerant abstractions~\cite{HT93,paxos} to the Byzantine setting.
The advent of blockchain~\cite{bitcoin,ethereum}, by far the most popular BAB application, brought a plethora of fascinating algorithms, optimized for various metrics, models, and execution scenarios (\cite{hotstuff,hyperledger,gilad2017algorand}, to name just a few).

Any BAB protocol implements two basic functionalities: \emph{data dissemination}, ensuring that every transaction (data unit) reaches every correct client, and \emph{ordering}, ensuring that the disseminated transactions are processed in the same order.
To simplify the protocol design and ensure stable throughput despite contention and failures, a class of BAB protocols~\cite{aleph,dagrider,narwhal,bullshark} delegate all communication between the system participants to data dissemination.
The (blocks of) transactions are reliably broadcast, the delivered block and their causal relations form a directed acyclic graph (DAG).

DAG-based consensus protocols based on reliable broadcast~\cite{dagrider, cordialminers, narwhal, bullshark, shoal, shoalplusplus} differ in the choice of how to select the anchors and how to decide which of them to commit.
The ordering logic in DAG-based protocols requires no additional communication, and can be implemented assuming common coins~\cite{dagrider} or partial synchrony~\cite{bullshark-ps}.
Maintaining \emph{certificates}, each containing $2f + 1$ signed references, requires significant bandwidth for sending them around and incurs an intensive load on the CPU for the generation and verification of signatures.

More recently, several proposals~\cite{cordialminers, mysticeti-tr,mahimahi} considered \emph{uncertified} DAGs with weaker consistency properties, that can be implemented on top of weaker broadcast primitives.
The resulting protocols intend to improve the latency of original solutions and lighten the CPU load, but may be more susceptible to network attacks as demonstrated in~\cite{shoalplusplus}.

In this paper, we chose the partially synchronous version of Bullshark~\cite{bullshark,bullshark-ps} as the basis for the discussion because it is sufficiently established in the field and because its relative simplicity allows us to focus on the core idea of a sparse DAG, not being distracted by a range of optimizations employed in later protocols~\cite{shoal,shoalplusplus,mysticeti-tr,mahimahi}.

Multiple BAB or Byzantine agreement protocols (e.g., \cite{gilad2017algorand,coinicidence,abraham2025asynchronous,blum2020asynchronous})
rely on randomly elected committees to improve scalability, assuming a weaker system resilience ($f<(n-\varepsilon)/3$ for some constant $\varepsilon$).
In this paper, however, we explore an alternative route to scaling through randomization, not based on committee election.
One key advantage of this new approach is that it maintains the original, theoretically optimal, resilience of $f<n/3$.

To alleviate the inherent communication complexity of classical \emph{quorum-based} reliable broadcast algorithms~\cite{bra87asynchronous,MR97srm,cachin2011introduction}, numerous \emph{probabilistic} broadcast algorithms resorted to \emph{gossip}-based communication patterns~\cite{Berenbrink2008,Elssser2015OnTI,Alistarh2010,Fernandess2007,Sourav2018SlowLF,Giakkoupis2016}.
In particular, Scalable Byzantine reliable broadcast~\cite{scalable-rb-disc} replaces quorums with random $O(\log n)$ samples, which allows it to achieve, with high probability, significantly lower communication complexity with a slight increase in latency.

\section{Preliminaries}

\label{sec:model}
We use the standard BFT system model.
For simplicity, we assume $n = 3f + 1$ participants $\Pi = \{p_1, \ldots, p_n\}$ (or \emph{validators}), up to $f$ of which might be \emph{Byzantine} faulty.
A faulty process may arbitrarily deviate from the protocol.
We assume partial synchrony: there is a known upper bound $\Delta$ on the network delays that, however, holds only after an unknown (but finite) time (called the \emph{global stabilization} time, GST).
We assume that all participants and the adversary are computationally bounded, i.e., they can perform at most $poly(\lambda)$ operations in the duration of the protocol, where $\lambda$ is the security parameter used to instantiate the cryptographic primitives.
We allow for an \emph{adaptive} adversary, i.e., it can select which nodes to corrupt (subject to the adaptive security of the digital signature scheme used).

\paragraph*{Reliable broadcast}
As a building block we use \textit{Byzantine Reliable Broadcast} primitive.
It is called via $r\_bcast(m, r)$ and produces events $r\_deliver(m, r, p_k)$, where $m$ is the message being broadcast, $p_k$ is the sender, and and $r$ is a sequence number (local and monotonically growing for each sender).
Reliable broadcast guarantees following properties:
\begin{itemize}
  \item \textbf{Validity}: if a correct validator $p_k$ invokes $r\_bcast(m, r)$ then eventually $p_k$ outputs $r\_deliver(m, r, p_k)$;
  \item \textbf{Agreement}: if a correct validator outputs $r\_deliver(m, r, p_k)$ then eventually every correct validator outputs $r\_deliver(m, r, p_k)$;
  \item \textbf{Integrity}: for each $r \in \mathbb{N}$ and $p_k \in \Pi$ every correct process outputs $r\_deliver(m, r, p_k)$ at most once (for any $m$);
\end{itemize}

\paragraph{Vector commitment}
Our protocol uses a \textit{vector commitment} (VC) scheme exporting the functions:
\begin{itemize}
  \item $VC.Commit([a_0, \ldots, a_{n - 1}]) \mapsto commitment$: generate proof of the commitment;
  \item $VC.Proof([a_0, \ldots, a_{n - 1}], i) \mapsto proof$: generate a proof that element $i$ equals to $a_i$;
  \item $VC.Verify(commitment, i, a_i, proof) \mapsto \{true, false\}$: verify that generated proof corresponds to the commitment.
\end{itemize}
In particular, we use the KZG vector commitment scheme~\cite{kzg} that provides commitments and proofs of constant size.

\paragraph{Approximate lower bound arguments}
For sampling in our protocol we use \textit{approximate lower bound arguments} (ALBA)~\cite{alba}.
In particular, we use the Telescope ALBA abstraction with parameters $n_p$, $n_f$ and $\SampleSize$.
The abstraction generates a sample of size $\SampleSize$ from a set of at least $n_p$ elements (satisfying a given predicate $A$), equipped with a succinct "proofs of possession" that the sample is indeed selected from a sufficiently large set satisfying $A$.
The abstraction exports two operations:
\begin{itemize}
  \item $ALBA.Prove(seed, S_p) \mapsto sample$: generate a sample of size $\SampleSize$ from elements of a set $S_p$ where $|S_p| \geq n_p$ with a given seed;
  \item $ALBA.Verify(seed, sample) \mapsto \{true, false\}$: verify the generated sample with a given seed.
\end{itemize}
It is ensured that for any sample constructed from a set of size at least $n_p$, verification succeeds with high probability.
At the same time, if the sample was constructed (by a computationally bounded adversary) from a set of at most $n_f$ elements, then verification succeeds only with a negligible probability.
In particular, by \cite[Corollary~1]{alba}:
\begin{itemize}
\item an honest party will successfully construct a valid proof having a set of at least $n_p$ valid elements with high probability;
\item computationally bounded adversary will not be able to construct a valid proof having a set of at most $n_f$ elements with high probability.
\end{itemize}
Telescope ALBA proofs consist of only $\Theta(\SampleSize)$ sampled elements provided that the ratio of $n_p / n_f$ is bounded by a constant.

We use a variation of the Telescope ALBA where prover and verifier are additionally provided with a randomness seed.
Seeding is used in our protocol to make sampling results unpredictable.
It can be easily implemented by ``salting'' all the hash values used in ALBA prover and verifier with the provided seed value.
We refer to~\cite{alba} for details on its implementation.

\paragraph*{Problem statement}
We implement the \emph{Byzantine Atomic Broadcast} (BAB) abstraction.
It is called via $a\_bcast(m, r)$ and produces events $a\_deliver(m, r, p_k)$, where $m$ is the message being broadcast, $p_k$ is the sender, and $r$ is a sequence number (local for the sender).
The abstraction satisfies the same properties as the Byzantine Reliable Broadcast, with additional property of \textbf{Total order}: if a correct process outputs $a\_deliver(m, r, p_k)$ before $a\_deliver(m', r', p'_k)$ then no correct process outputs $a\_deliver(m', r', p'_k)$ before $a\_deliver(m, r, p_k)$.
We require all these properties to be satisfied \emph{with high probability}, i.e., the error probability is exponentially small in the security parameter $\lambda$.

\section{Protocol}

\label{sec:protocol}
We begin by recalling the basic operation of the partially synchronous version of Bullshark~\cite{bullshark,bullshark-ps}, as we make use of its DAG construction and commit rules.
Then we describe Sparse Bullshark, with the focus on its key differences from prior work.
In particular, we present our verifiable sampling mechanism that anticipates the adversarial attempts to negatively affect the DAG connectivity.

\subsection{Partially synchronous Bullshark}

\begin{algorithm}
  \caption{Data structures and basic utilities for process $p_i$}
  \label{util}
  \begin{smartalgorithmic}[1]
    \Procedure{\PathExists}{$v$, $u$}
      \Comment{Check if exists a path from $v$ to $u$ in the $\dagvar$}
      \State \Return{$\exists [v_1, v_2, \ldots, v_k]: v_1 = v \land v_k = u \land \forall j \in [2..k]: v_j \in \bigcup_{r \geq 1} \dagvar_i[r] \land v_j \in v_{j - 1}.\vEdges$}
    \EndProcedure

    \Statex

    \Procedure{\GetAnchor}{$r$}
      \DeclareLocalVariable{p}
      \If{$r \bmod 2 = 1$}
        \State \Return{$\bot$}
      \EndIf
      \State $\p \gets (r/2) \bmod n$
      \Comment{Choose leader in round robin fashion}
      \If{$\exists v \in \dagvar_i[r]: v.\vSource = \p$}
        \State \Return{$v$}
      \EndIf
      \State \Return{$\bot$}
    \EndProcedure
  \end{smartalgorithmic}
\end{algorithm}

\begin{algorithm}
  \caption{DAG construction for process $p_i$}
  \label{construction}
  \begin{smartalgorithmic}[1]
    \Statex \textbf{Local variables:}
      \StateX $\buffer \gets \{\}$
      \StateX $\currentRound \gets 0$
    \Statex

    \Procedure{\MayAdvanceRound}{}
      \If{$|\dagvar_i[\currentRound]| < 2f + 1$}
        \LineComment{Quorum of nodes is neccessary to advance}
        \State \Return{False}
      \EndIf
      \If{\textit{timer expired}}
        \State \Return{True}
      \EndIf
      \If{$\currentRound \bmod 2 = 0$}
        \State \Return{$\GetAnchor(\currentRound) \in \dagvar_i[\currentRound]$}
      \Else
        \DeclareLocalVariable[prevAnchor]{A}
        \State $\prevAnchor \gets \GetAnchor(\currentRound - 1)$
        \State \Return{$|\{v \in \dagvar_i[\currentRound]~|~\prevAnchor \in v.\vEdges\}| \geq 2f + 1 \lor |\{v \in \dagvar_i[\currentRound]~|~\prevAnchor \notin v.\vEdges\}| \geq f + 1$}
      \EndIf
    \EndProcedure

    \Statex

    \While{True}
      \For{$v \in \buffer$}
        \If{$\forall v' \in v.\vEdges: v' \in \bigcup_{\currentRound \geq 1} \dagvar_i[r]$}
          \State $\dagvar_i[v.\vRound] \gets \dagvar_i[v.\vRound] \cup \{v\}$
          \State $\buffer \gets \buffer \setminus \{v\}$
          \State $\TryCommitting(v)$
        \EndIf
      \EndFor

      \If{\MayAdvanceRound}
        \State $\currentRound \gets \currentRound + 1$
        \State $v \gets \CreateNewVertex(\currentRound)$
        \State $r\_bcast_i(v, \currentRound)$
        \State \textit{reset timer}
      \EndIf
    \EndWhile

    \Statex

    \Upon{$r\_deliver(v, round, p_k)$}
      \If{$\ValidateVertex(v, round, k)$}
        \State $\buffer \gets \buffer \cup \{v\}$
      \EndIf
    \EndUpon
  \end{smartalgorithmic}
\end{algorithm}

\begin{algorithm}
  \caption{Ordering for process $p_i$}
  \label{ordering}
  \begin{smartalgorithmic}[1]
    \DeclareLocalVariable{anchor}
    \DeclareLocalVariable{votes}
    \DeclareLocalVariable[varR]{r}
    \DeclareLocalVariable{prevAnchor}
    \DeclareLocalVariable{verticesToOrder}

    \Statex \textbf{Local variables:}
      \StateX $\orderedVertices \gets \{\}$
      \StateX $lastOrderedRound \gets 0$
      \StateX $\orderedAnchorsStack \gets \text{empty stack}$
    \Statex

    \Procedure{\TryCommitting}{$v$}
      \State $\anchor \gets \GetAnchor(v.\vRound - 1)$
      \If{$\anchor = \bot$}
        \State \Return{}
      \EndIf
      \If{$|\{v' \in \dagvar_i[v.\vRound]: \anchor \in v'.\vEdges\}| \geq \DirectCommitThreshold()$}
        \State $\OrderAnchors(\anchor)$
      \EndIf
    \EndProcedure

    \Statex

    \Procedure{\OrderAnchors}{$v$}
      \If{$v.\vRound \leq \lastOrderedRound$}
        \State \Return{}
      \EndIf
      \State $\anchor \gets v$
      \State $\orderedAnchorsStack.\textit{push}(\anchor)$
      \State $\varR \gets \anchor.\vRound - 2$
      \While{$\varR > \lastOrderedRound$}
        \State $\prevAnchor \gets \GetAnchor(\varR)$
        \If{$\PathExists(\anchor, \prevAnchor)$}
          \State $\orderedAnchorsStack.\textit{push}(\prevAnchor)$
          \State $\anchor \gets \prevAnchor$
        \EndIf
        \State $\varR \gets \varR - 2$
      \EndWhile
      \State $\lastOrderedRound \gets v.\vRound$
      \State $\OrderHistory()$
    \EndProcedure

    \Statex

    \Procedure{\OrderHistory}{}
      \While{$\neg \orderedAnchorsStack.\textit{isEmpty}()$}
        \State $\anchor \gets \orderedAnchorsStack.\textit{pop}()$
        \State $\verticesToOrder \gets \{v \in \bigcup_{r > 0} \dagvar_i[r]
            \mid \PathExists(\anchor, v) \land v \not\in \orderedVertices\}$
          \For{$v \in \verticesToOrder$ in some deterministic order}
            \State \textbf{order} $v$
            \State $\orderedVertices \gets \orderedVertices \cup \{v\}$
          \EndFor
      \EndWhile
    \EndProcedure
  \end{smartalgorithmic}
\end{algorithm}

\begin{algorithm}
  \caption{Eventually synchronous bullshark: protocol-specific utilities}
  \label{esbs}
  \begin{smartalgorithmic}[1]
    \Statex \textbf{Local variables:}
      \StateX struct $\textit{vertex } v$: \Comment{The struct of a vertex in the DAG}
        \StateXX $v.\vRound$ - the round of $v$ in the DAG
        \StateXX $v.\vSource$ - the process that broadcast $v$
        \StateXX $v.\vBlock$ - a block of transactions information
        \StateXX $v.\vEdges$ - a set of $n - f$ vertices in $v.\vRound - 1$
      \StateX $\dagvar_i[]$ - an array of sets of vertices

    \Statex

    \Procedure{\CreateNewVertex}{$round$}
      \State \textbf{wait until} $\neg \blocksToPropose.\text{empty}$()
      \State $v.\vBlock \gets \blocksToPropose.\text{dequeue}()$
      \State $v.\vRound \gets round$
      \State $v.\vSource \gets i$
      \State $v.\vEdges \gets \dagvar_i[round - 1]$
      \State \Return{$v$}
    \EndProcedure

    \Statex

    \Procedure{\ValidateVertex}{$v$, $round$, $k$}
      \State \Return{$v.\vSource = k \land v.\vRound = round \land |v.\vEdges| \geq 2f + 1$}
    \EndProcedure

    \Statex

    \Procedure{\DirectCommitThreshold}{}
      \State \Return{f + 1}
    \EndProcedure
  \end{smartalgorithmic}
\end{algorithm}

Bullshark, as well as other DAG-based consensus protocols, has two conceptual elements: DAG construction and ordering.
All communication happens in the DAG construction phase, while ordering is a purely local procedure: a validator commits new blocks by interpreting its local version of the DAG.

DAG construction is divided into communication \emph{rounds}.
In every round $r$, each validator creates a new DAG node and shares it with other validators using reliable broadcast~\cite{bra87asynchronous,cachin2011introduction}.
The round-$r$ node contains a batch of \emph{transactions} (application-specific operations) as well as $n - f$ or more references to round-$(r - 1)$ nodes.
Validator can proceed to round $r + 1$ as soon as it receives at least $n - f$ $r$-round nodes via reliable broadcast.

Along with the construction of the DAG, each validator is constantly trying to order some nodes of its current copy of the DAG.
Rounds are assembled in \emph{waves}, each wave consists of two rounds.
Every even round contains the \emph{anchor} of the wave---a predefined ``leader'' node that validators will try to commit in the ordering.
Various strategies can be employed to select anchors: deterministically in a round-robin fashion, randomly by using a common coin primitive or dynamically using the leader reputation mechanism.
To determine whether an anchor should be committed, several \emph{commit rules} may be applied:
\begin{enumerate}
  \item \textit{Direct commit rule}:
  if $f + 1$ nodes in round $r + 1$ are connected to the anchor of round $r$, it is committed;
  \item \textit{Indirect commit rule}:
  recursively, before committing an anchor node, every prior anchor node in its causal past is committed first.
\end{enumerate}

Therefore, anchors are committed in the order of their rounds.
Additionally, whenever an anchor is committed, all (not yet ordered) nodes in its causal history are placed in the global ordering (and the corresponding messages are delivered to the application), using some deterministic rule.

Safety of Bullshark relies on the following invariant:
if a round-$r$ anchor $A$ is committed at a correct validator using the direct commit rule, then the anchor of every future round will have a path to it.
Thus, if $A$ has been committed by a correct validator using a direct commit rule, then it will be committed by all other correct validators using direct or indirect commit rules.
This way, the correct validators agree on the  order in which anchors are committed, and hence on the ordering of the whole DAG.

The pseudocode of eventually synchronous Bullshark is presented in four parts.
Algorithm~\ref{util} presents a function that checks if a path between two  DAG node exists and a function that determines the anchor of a  given round.
Algorithm~\ref{construction} describes the DAG construction, and Algorithm~\ref{ordering}---the ordering procedure.
In Algorithm~\ref{esbs}, we describe the DAG node structure, the procedure for creating a new node, the procedure for checking the validity of received node, and a function that determines how many votes are necessary to apply the direct commit rule to an anchor (compared to $f + 1$ votes used in Bullshark,  we use $2f+1$ votes, as we show below).

\subsection{Sparse Bullshark} \label{subsec:sparse}

\begin{algorithm}
  \caption{Sparse Bullshark: protocol-specific utilities}
    \label{alg:new}
  \newcommand{\diff}[1]{\textcolor{blue}{#1}}

  \begin{smartalgorithmic}[1]
    \Statex \textbf{Local variables:}
      \StateX struct $\textit{vertex } v$: \Comment{The struct of a vertex in the DAG}
        \StateXX $v.\vRound$ - the round of $v$ in the DAG
        \StateXX $v.\vSource$ - the process that broadcast $v$
        \StateXX $v.\vBlock$ - a block of transactions information
        \StateXX $v.\vEdges$ - a set of \diff{$\SampleSize$} vertices in $v.\vRound - 1$
        \StateXX \diff{$v.\vProof$ - proof of sampling from a set of $2f + 1$ vertices from the previous round}

      \StateX $\dagvar_i[]$ - an array of sets of vertices

    \Statex

    {\color{blue}
    \Procedure{\VerifiablySample}{$S$}
      \DeclareLocalVariable{vectorCommitment}
      \DeclareLocalVariable{sample}
      \DeclareLocalVariable{openingProofs}
      \State $\vectorCommitment = VC.Commit(S)$
      \State $\sample = ALBA.Prove(\Hash(\vectorCommitment), S)$
      \State $\openingProofs = \{VC.Proof(S, i)~|~i \in sample\}$
      \State \Return $\sample, (\vectorCommitment, \openingProofs)$
    \EndProcedure
    }

    \Statex

    {\color{blue}
    \Procedure{\ValidateSample}{$sample$, $proof$}
      \DeclareLocalVariable{vectorCommitment}
      \DeclareLocalVariable{sample}
      \DeclareLocalVariable{openingProofs}
      \State $(\vectorCommitment, \openingProofs) \gets proof$
      \For{$v \in \sample$}
        \If{$\neg VC.Verify(\vectorCommitment, v.\vSource, v, \openingProofs[v])$}
          \State \Return $false$
        \EndIf
      \EndFor
      \State \Return $ALBA.Verify(\Hash(\vectorCommitment), sample)$
    \EndProcedure
    }

    \Statex

    \Procedure{\CreateNewVertex}{$round$}
      \State \textbf{wait until} $\neg \blocksToPropose.\text{empty}$()
      \State $v.\vBlock \gets \blocksToPropose.\text{dequeue}()$
      \State $v.\vRound \gets round$
      \State $v.\vSource \gets i$
      \DeclareLocalVariable{sample}
      \DeclareLocalVariable{samplingProof}
      {\color{blue}
      \State $\sample, \samplingProof = \VerifiablySample(\dagvar_i[round - 1], D)$
      \State $v.\vProof \gets \samplingProof$
      \State $v.\vEdges \gets \sample \cup \{\dagvar_i[round - 1][i]\}$
      }
      
      \If{$\GetAnchor(round - 1) \neq \bot$}
        \Comment {Always try to link to the wave leader}
        \State $v.\vEdges \gets v.\vEdges \cup \{\GetAnchor(round - 1)\}$ \label{line:include-anchor}
      \EndIf
      \State \Return $v$
    \EndProcedure

    \Statex

    \Procedure{\ValidateVertex}{$v$, $round$, $k$}
      \If{$v.\vSource \neq k \lor v.\vRound \neq round \diff{~\lor~|v.\vEdges| > \SampleSize + 2}$}
        \Statex \Comment{There can be at most $\SampleSize + 2$ edges from a vertex: $\SampleSize$ selected, source and anchor}
        \State \Return{False}
      \EndIf
      {\color{blue}
      \If{$\neg \ValidateSample(v.\vEdges, v.\vProof)$}
        \State \Return{False}
      \EndIf
      }
      \State \Return{True}
    \EndProcedure

    \Statex

    \Procedure{\DirectCommitThreshold}{}
      \State {\color{blue} \Return{2f + 1}}
    \EndProcedure
  \end{smartalgorithmic}
\end{algorithm}

We obtain Sparse Bullshark from the baseline Bullshark with a few modifications presented in \Cref{alg:new}.
The functions in \Cref{alg:new} replace those in \Cref{esbs}, the modifications we introduce are highlighted in blue.

\subsubsection{Verifiable sampling}

The key distinguishing feature of Sparse Bullshark is that a DAG vertex stores only a \emph{sample} of vertices of size $\Theta(\lambda)$ from the previous round instead of a full quorum.
We denote the sample size by $\SampleSize$.
Intuitively, when a correct validator collects $n - f$ round-$r$ vertices, it selects, uniformly at random, a subset of size $\SampleSize$ of them and creates a new vertex in the round $r + 1$ referencing the selected subset.
The specific choice of $\SampleSize$ affects communication complexity, latency, and the security of the protocol.
More specifically, as we show in~\Cref{sec:correctness} the protocol is secure against computationally bounded adversaries as long as $\SampleSize$ is $\Omega(\lambda)$.
We discuss the impact of the sample size on latency in detail in \Cref{subsec:latency}. 

Instead of taking a random sample, Byzantine validators might attempt to undermine the safety and liveness of the protocol by maliciously selecting a set of $\SampleSize$ vertices from the previous round in order to violate the protocol's safety property.
To anticipate this behavior, we employ \textit{verifiable sampling}:
Each validator must provide a proof that it has indeed collected a quorum of $n - f$ nodes in a round and has sampled $\SampleSize$ of them in a pseudo-random manner.

When creating a vertex in round $r$, validator $p_i$ constructs this proof by:
\begin{enumerate}
\item
committing to all the certificates of the collected quorum of vertices from round $r - 1$ using vector commitment.
The commitment is performed for $\dagvar_i[r - 1]$ interpreted as an array: for each sender $j$ we take the certificate of a vertex created by $p_j$ or $\bot$ if no such vertex is contained in the local DAG view of $p_i$.
This step prevents the adversary from manipulating with edges of vertices created by the Byzantine parties after the sampling is performed.
\item
constructing a sample of size $\SampleSize$ from certificates of the collected quorum using Telescope $ALBA.Prove$ with parameters $n_p = 2f + 1$ and $n_f = f$.
With high probability a valid sample will be constructed.
Edges are then created to all the vertices from the sample.
When broadcasting the new vertex, the validator attaches the initial commitment, opening proofs for all the sampled elements, and the corresponding ALBA proof.
\end{enumerate}

Upon receiving a vertex from reliable broadcast, every validator verifies both vector commitment proof and ALBA proof attached to the vertex.
Any vertices with invalid proofs are disregarded.

This approach, while restricting the behavior of Byzantine nodes to some degree, still leaves open the possibility of \emph{grinding attacks}: a malicious validator that receives more than $n - f$ round-$r$ blocks may try many different subsets of size $n - f$ to find a more ``favorable'' sample.
We show below, however, that as long as $\SampleSize = \Omega(\lambda)$, this attack is, intuitively, as efficient as any other naive attack on the cryptographic primitives, such as simply trying to recover a correct node's private key by a brute force search.

\subsubsection{Referencing the anchor}

Since the references to the anchor in round $r$ by the nodes in round $r + 1$ are used as votes for the direct commit rule, on top of selecting $\SampleSize$ random parents, if a node finds the anchor of the previous round in its quorum, it includes the anchor into its sample (\cref{line:include-anchor}).
Notice that trying to enforce this rule on Byzantine users would be unhelpful as they can always simply pretend to not have received the anchor node.

\subsubsection{Direct commit rule}
\label{subsec:direct-commit}

Finally, we modify the direct commit rule, raising the threshold from $f + 1$ to $n - f = 2f + 1$.
This doesn't affect the protocol's correctness.
Indeed, if the user producing the anchor node is correct and the network is in the stable period, all (at least $n - f$) correct users will reference the anchor node in round $r + 1$.
Intuitively, the modification ensures that, with high probability, every committed anchor is reachable (through a sequence of samples) from every consequent anchor, which guarantees \textbf{Total order}.

\section{Correctness and complexity analysis}
\label{sec:analysis}

\subsection{Correctness}
\label{sec:correctness}

As with prior DAG-based protocols, by using reliable broadcast,
the honest users reach a form of eventual agreement on their DAGs: if a node appears in the DAG of a correct user, every other correct user will eventually have the node in its DAG.
Moreover, the honest users agree on the causal history of that node.

The \textbf{Integrity} property follows from Integrity of reliable broadcast and the DAG construction rules: a broadcast message can be part of at most one DAG node; and as a message is delivered only when the corresponding DAG node appears in the causal history of some anchor for the first time, the message is delivered at most once.

Intuitively, to ensure \textbf{Total order}, it is necessary and sufficient to maintain the key invariant of Bullshark: if an anchor is directly committed, then it will be in the causal history of any future anchor.

\begin{restatable}{theorem}{ThTO}
\label{th:total-order}
Sparse Bullshark implements the \textbf{Total order} property of BAB with high probability.
\end{restatable}

The properties of of \textbf{Validity} and \textbf{Agreement} follow, intuitively, from the fact that after the GST all the anchors will be referenced by all the correct validators and thus committed.
Therefore, every vertex will eventually be included in a causal history of some committed anchor and all the messages will be delivered.

\begin{restatable}{theorem}{ThVA}
\label{th:agreement-validity}
Sparse Bullshark implements the \textbf{Agreement} and \textbf{Validity} properties with high probability.
\end{restatable}

The proofs are delegated to Appendix~\ref{app:proofs}.

\subsection{Latency}
\label{subsec:latency}

During periods of synchrony, all blocks created by correct validators in odd rounds will link to the anchor of the previous round because of the additional timer delay before creating a new vertex.
Therefore, all anchor nodes by correct validators will be committed.
The frequency of anchors being committed in Sparse Bullshark is the same as in the original Bullshark.

However, as the anchor node in Sparse Bullshark only references $\SampleSize$ nodes from the previous round instead of $n - f$, the protocol incurs additional \emph{inclusion latency}: the delay between a vertex being broadcast and being included in some anchor's causal history.
Assuming that each node has its parents selected at random from the previous round\footnote{For both baseline Bullshark and Sparse Bullshark, to analyze the inclusion latency, we assume that each user receives nodes from a random quorum in each round. Indeed, otherwise up to $f$ users may always be late and have very large inclusion latency (until they become anchors themselves).}, the exact delay will follow the rumor-spreading pattern~\cite{rumor-spreading,rumor-spreading-multicall,rumor-spreading-revisited}.

\begin{figure}
    \centering
    \begin{subfigure}{.5\linewidth}
        \centering
        \includegraphics[width=\linewidth]{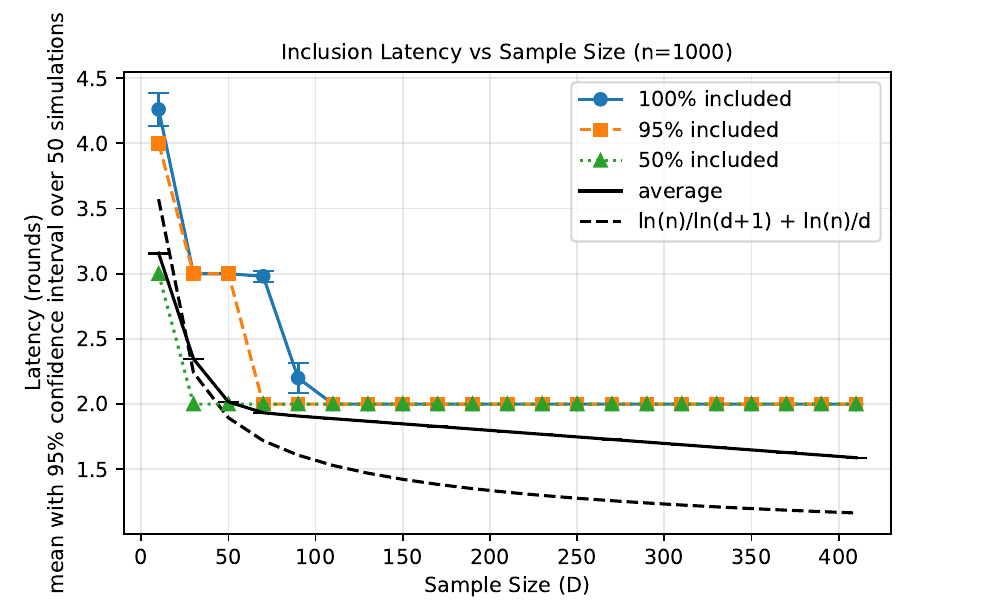}
        \caption{inclusion latency for $n=1000$ nodes}
        \label{subfig:inc-latency-1k}
    \end{subfigure}%
    \begin{subfigure}{.5\linewidth}
        \centering
        \includegraphics[width=\linewidth]{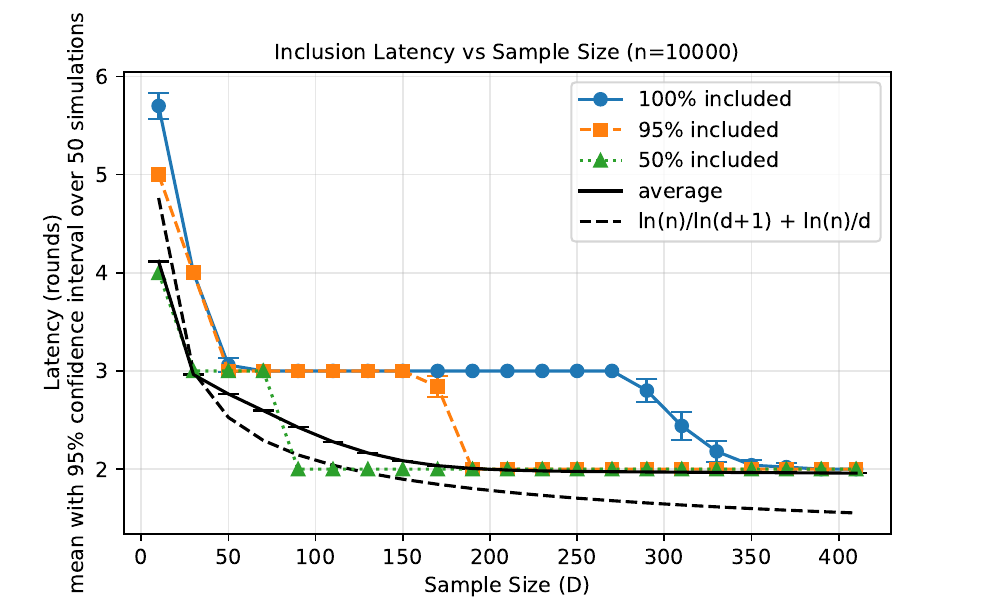}
        \caption{inclusion latency for $n=10\,000$ nodes}
        \label{subfig:inc-latency-10k}
    \end{subfigure}

    \caption{Inclusion latency (in DAG rounds) simulation, for varying sample size ($\SampleSize$), assuming each node has $\SampleSize$ nodes from the previous round uniformly at random as its parents.}
    \label{fig:inc-latency}
\end{figure}

The inclusion delays (in terms of DAG rounds) measured in our simulations are presented in \Cref{subfig:inc-latency-1k,subfig:inc-latency-10k}, for 1000 and 10\,000 users, respectively.
For example, selecting $\SampleSize = 70$ for 1000 users or $\SampleSize = 190$ for 10\,000 users is enough to ensure that 95\% of nodes will have inclusion latency of at most 2, i.e., will be referenced by some node referenced directly by the anchor.
In contrast, Bullshark has inclusion latency of $1$ for the $n - 1$ nodes referenced directly by the leader and $2$ (assuming random message ordering, with high probability) for the rest of the nodes.

\subsection{Communication complexity} \label{subsec:complexity}

We estimate the per-node metadata communication complexity.
In Sparse Bullshark, each DAG node contains up to $\SampleSize + 2 = \Theta(\lambda)$ references to other nodes compared to $\Theta(n)$ references in Bullshark.
Each reference is a certificate for the corresponding DAG node and its size is:
\begin{itemize}
\item $\Theta(\lambda)$ in case threshold signatures are available;
\item $\Theta(n + \lambda) = \Theta(n)$ in case multi-signatures are available;
\item $\Theta(n \lambda)$ otherwise.
\end{itemize}
We will denote the size of a certificate by $C$.

In Sparse Bullshark, the metadata of a node consists of $\Theta(\lambda)$ references, so metadata size for each node is $\Theta(\lambda C)$.
In contrast, in Bullshark, the metadata consists of $\Theta(n)$ references, so the size of a node is $\Theta(nC)$.
Additionally, each vertex contains a vector commitment proof that has constant size with KZG scheme, as well as an ALBA proof of size $O(\lambda)$.

Practical implementations rely on a combination of Signed Echo Consistent Broadcast~\cite{cachin2011introduction} with randomized \emph{pulling}\cite[Section 4.1]{narwhal}.
This approach allows us to surpass the theoretical bound on communication complexity of deterministic reliable broadcast~\cite{dolev-reischuk-bound} and achieve linear expected communication complexity.
With this approach, the expected communication complexity is $\Theta(n^2 C)$ for Bullshark and $\Theta(n \lambda C)$ for Sparse Bullshark.
In \Cref{tab:comparison}, we present the resulting communication complexity and estimated user egress traffic that our protocol gives for different cryptographic setups. 
Compared to Bullshark, we win a factor of $n$ in asymptotic communication complexity and an order of magnitude in the estimated generated traffic.

\section{Evaluation}
\label{sec:eval}

We evaluated our Sparse Bullshark protocol using a high-performance discrete event network simulator implemented in Rust.\footnote{\url{https://anonymous.4open.science/r/sparse_bullshark_evaluation-3AB5/}}
In our evaluations for emulated message latencies, we used ($50$, $10$)-normal distribution for $99\%$ of messages and ($500$, $10$)-normal distribution for a randomly selected 1\% of the messages.
We limited the network bandwidth of each simulated validator in order to emulate a scenario where the amount of transmitted data gets close to the network capacity on a host and starts affecting the latency and throughput of the protocol.

Our goal was to evaluate the resources consumed by our protocol on the \emph{metadata} exchange, compared to Bullshark~\cite{bullshark-ps}.
Evaluations were performed for the network of $1000$ validators.

\begin{figure}
    \centering
    \begin{subfigure}{.5\linewidth}
        \centering
        \includegraphics[width=\linewidth]{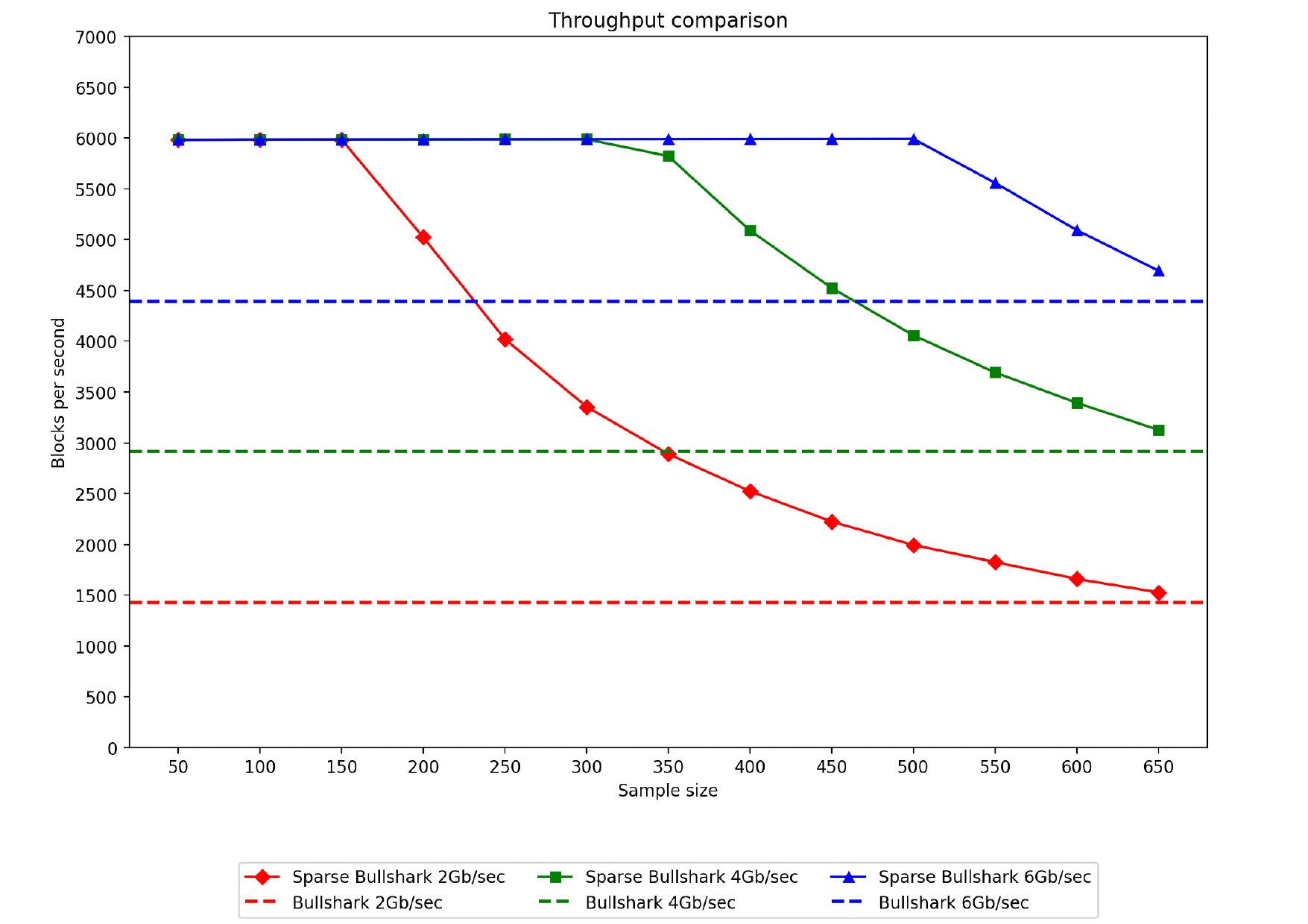}
        \caption{Throughput}
        \label{fig:tput}
    \end{subfigure}%
    \begin{subfigure}{.5\linewidth}
        \centering
        \includegraphics[width=\linewidth]{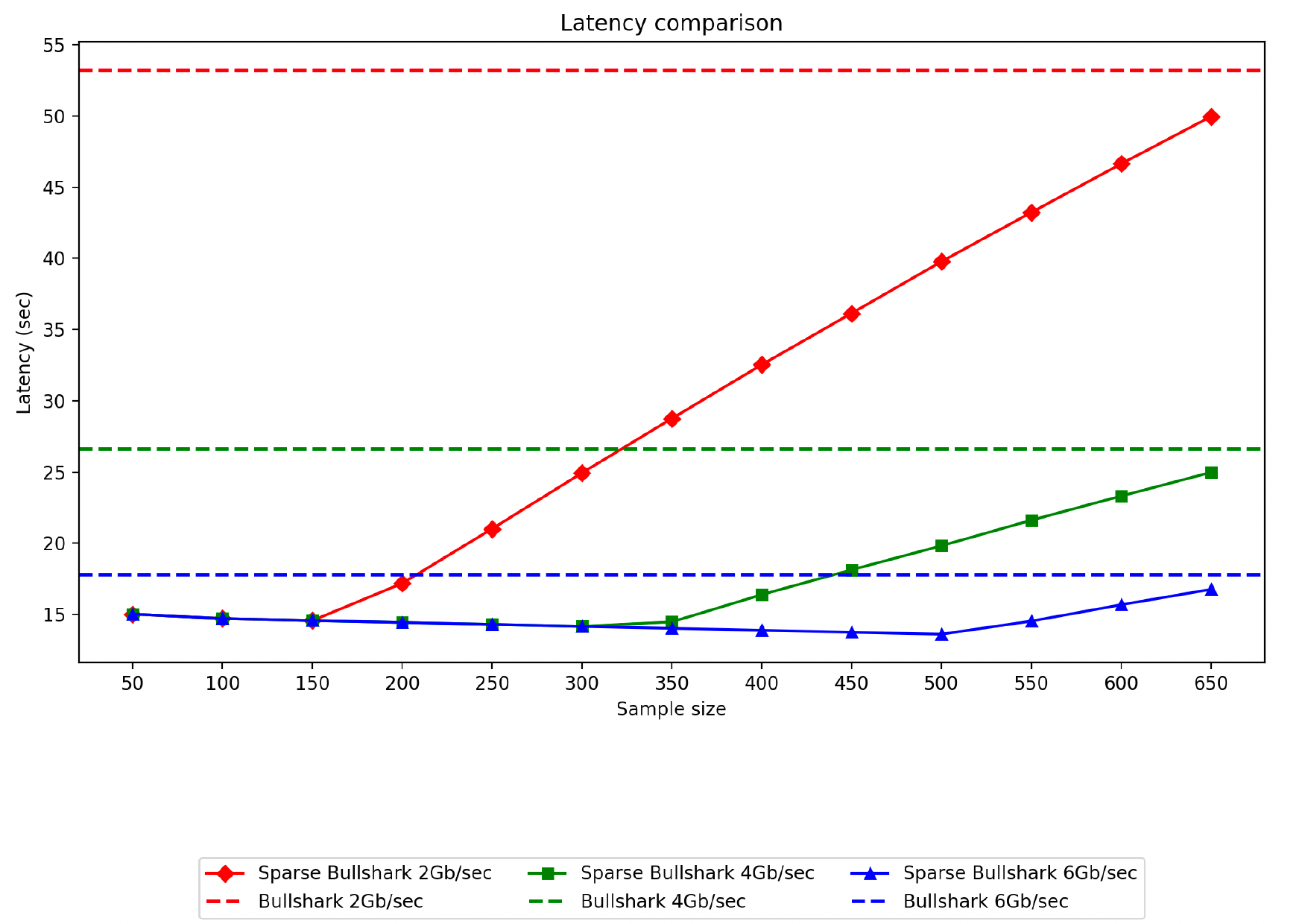}
        \caption{Average commit latency}
        \label{fig:latency}
    \end{subfigure}

    \caption{Simulated statistics of Bullshark and Sparse Bullshark with various bandwidth limits and sample sizes}
\end{figure}

Figure~\ref{fig:tput} relates the throughput of the protocol (in blocks committed per second) to the sample size, for various per-node network bandwidth limits.
We can observe that small sample sizes dramatically increase the throughput, allowing the protocol to be operate normally when, under the same conditions, Bullshark saturates the network.

Figure~\ref{fig:latency} shows the change in latency depending on the sample size for various per-node network bandwidth limits.
Despite the theoretical result of increased latency in terms of message delays in the worst case, we can see that in practice due to more efficient network utilization, Sparse Bullshark achieves lower latency under limited network bandwidth.

Evaluations confirm that while the sampling significantly decreases the amount of metadata transmitted through the network, which also leads to a decrease in latency when the bandwidth restriction is below a certain threshold.

\section{Concluding remarks}
\label{sec:conc}

In this paper, we propose a novel architecture to achieve better scalability in DAG-based consensus protocols: We construct a \emph{sparse DAG}, where each node references only a small random subset of nodes in the previous round.
To illustrate this approach, we present \emph{Sparse Bullshark}, reducing the communication, storage, and computational complexities of Bullshark---a state-of-the-art DAG-based consensus protocol---by a factor of $\frac{n}{\lambda}$ (where $\lambda$ is the security parameter).

The surprising aspect of our result is that, unlike other solutions that utilize sampling for scaling Byzantine consensus~\cite{gilad2017algorand,coinicidence,abraham2025asynchronous,blum2020asynchronous}, we maintain the optimal resilience of $f<n/3$.

We demonstrate the advantages of Sparse Bullshark compared to the baseline in terms of both throughput and latency on systems with bounded network bandwidth in empirical simulations.

We leave the exploration of \emph{sparse} variants of other DAG-based protocols, such as~\cite{cordialminers,mysticeti-tr,mahimahi,shoal,shoalplusplus}, for future work.

\bibliographystyle{ACM-Reference-Format}
\bibliography{references}

@InProceedings{alba,
author="Chaidos, Pyrros
and Kiayias, Aggelos
and Reyzin, Leonid
and Zinovyev, Anatoliy",
editor="Joye, Marc
and Leander, Gregor",
title="Approximate Lower Bound Arguments",
booktitle="Advances in Cryptology -- EUROCRYPT 2024",
year="2024",
publisher="Springer Nature Switzerland",
address="Cham",
pages="55--84",
abstract="Suppose a prover, in possession of a large body of valuable evidence, wants to quickly convince a verifier by presenting only a small portion of the evidence.",
isbn="978-3-031-58737-5",
doi = "10.1007/978-3-031-58737-5_3"
}

@article{CHT96,
  Author = "Chandra, Tushar Deepak and Hadzilacos, Vassos and Toueg, Sam",
  Title = "The weakest failure detector for solving consensus",
  Journal = {J. ACM},
  month = jul,
  Year = 1996,
  volume = 43,
  number = 4,
  pages = "685--722"
}

@book{cachin2011introduction,
  title={Introduction to reliable and secure distributed programming},
  author={Cachin, Christian and Guerraoui, Rachid and Rodrigues, Lu{\'\i}s},
  year={2011},
  publisher={Springer Science \& Business Media}
}

@misc{bitcoin,
  title={Bitcoin: A Peer-to-Peer Electronic Cash System},
  author={Nakamoto, Satoshi},
  year={2008},
}

@article{ethereum,
  title={Ethereum: A secure decentralised generalised transaction ledger},
  author={Wood, Gavin},
  journal={Ethereum project yellow paper},
  volume={151},
  number={2014},
  pages={1--32},
  year={2014}
}

@article{dls, author = {Dwork, Cynthia and Lynch, Nancy and Stockmeyer, Larry},  title = {Consensus in the Presence of Partial Synchrony},  journal = {J. ACM},  issue_date = {April 1988},  volume = {35},  number = {2},  month = apr,  year = {1988},  issn = {0004-5411},  pages = {288--323},  numpages = {36},  publisher = {ACM},  address = {New York, NY, USA}}

@article{pbft,
  title={Practical Byzantine fault tolerance and proactive recovery},
  author={Castro, Miguel and Liskov, Barbara},
  journal={ACM Transactions on Computer Systems (TOCS)},
  volume={20},
  number={4},
  pages={398--461},
  year={2002},
  publisher={ACM New York, NY, USA}
}

@article{MR97srm,
  author    = {Dahlia Malkhi and
               Michael K. Reiter},
  title     = {A High-Throughput Secure Reliable Multicast Protocol},
  journal   = {Journal of Computer Security},
  volume    = {5},
  number    = {2},
  pages     = {113--128},
  year      = {1997},
  url       = {https://doi.org/10.3233/JCS-1997-5203},
  doi       = {10.3233/JCS-1997-5203},
  timestamp = {Fri, 26 May 2017 22:53:13 +0200},
  biburl    = {https://dblp.org/rec/bib/journals/jcs/MalkhiR97},
  bibsource = {dblp computer science bibliography, https://dblp.org}
}

@inproceedings{scalable-rb-disc,
  author    = {Rachid Guerraoui and
               Petr Kuznetsov and
               Matteo Monti and
               Matej Pavlovic and
               Dragos{-}Adrian Seredinschi},
  editor    = {Jukka Suomela},
  title     = {Scalable Byzantine Reliable Broadcast},
  booktitle = {33rd International Symposium on Distributed Computing, {DISC} 2019,
               October 14-18, 2019, Budapest, Hungary},
  series    = {LIPIcs},
  volume    = {146},
  pages     = {22:1--22:16},
  publisher = {Schloss Dagstuhl - Leibniz-Zentrum f{\"{u}}r Informatik},
  year      = {2019}
}

@article{bra87asynchronous,
  title={Asynchronous {B}yzantine agreement protocols},
  author={Bracha, Gabriel},
  journal={Information and Computation},
  volume={75},
  number={2},
  pages={130--143},
  year={1987},
  publisher={Academic Press}
}

@article{flp,
  Author = "Fischer, Michael J. and Nancy A. Lynch and Michael S. Paterson",
  Title = "Impossibility of Distributed Consensus with one Faulty Process",
  Journal = {JACM},
  volume = 32,
  number = 2,
  month = apr,
  Year = 1985,
  pages = "374--382"
  }

@inproceedings{gilad2017algorand,
  title={Algorand: Scaling byzantine agreements for cryptocurrencies},
  author={Gilad, Yossi and Hemo, Rotem and Micali, Silvio and Vlachos, Georgios and Zeldovich, Nickolai},
  booktitle={Proceedings of the 26th symposium on operating systems principles},
  pages={51--68},
  year={2017}
}

@inproceedings{coinicidence,
  author       = {Shir Cohen and
                  Idit Keidar and
                  Alexander Spiegelman},
  editor       = {Hagit Attiya},
  title        = {Not a {COINcidence}: Sub-Quadratic Asynchronous Byzantine Agreement
                  {WHP}},
  booktitle    = {34th International Symposium on Distributed Computing, {DISC} 2020,
                  October 12-16, 2020, Virtual Conference},
  series       = {LIPIcs},
  volume       = {179},
  pages        = {25:1--25:17},
  publisher    = {Schloss Dagstuhl - Leibniz-Zentrum f{\"{u}}r Informatik},
  year         = {2020}
}

@article{paxos,
  Author = "Lamport, Leslie",
  Title = "The {Part-Time} Parliament",
  Journal = "ACM Transactions on Computer Systems",
  Volume = " 16 ",
  Number = 2,
  Pages = "133-169",
  year = 1998,
month= may,
}

@inproceedings{dagrider,
  author       = {Idit Keidar and
                  Eleftherios Kokoris{-}Kogias and
                  Oded Naor and
                  Alexander Spiegelman},
  editor       = {Avery Miller and
                  Keren Censor{-}Hillel and
                  Janne H. Korhonen},
  title        = {All You Need is {DAG}},
  booktitle    = {{PODC} '21: {ACM} Symposium on Principles of Distributed Computing,
                  Virtual Event, Italy, July 26-30, 2021},
  pages        = {165--175},
  publisher    = {{ACM}},
  year         = {2021}
}

@inproceedings{bullshark,
  author       = {Alexander Spiegelman and
                  Neil Giridharan and
                  Alberto Sonnino and
                  Lefteris Kokoris{-}Kogias},
  editor       = {Heng Yin and
                  Angelos Stavrou and
                  Cas Cremers and
                  Elaine Shi},
  title        = {Bullshark: {DAG} {BFT} Protocols Made Practical},
  booktitle    = {Proceedings of the 2022 {ACM} {SIGSAC} Conference on Computer and
                  Communications Security, {CCS} 2022, Los Angeles, CA, USA, November
                  7-11, 2022},
  pages        = {2705--2718},
  publisher    = {{ACM}},
  year         = {2022}
}

@inproceedings{narwhal,
  author       = {George Danezis and
                  Lefteris Kokoris{-}Kogias and
                  Alberto Sonnino and
                  Alexander Spiegelman},
  editor       = {Y{\'{e}}rom{-}David Bromberg and
                  Anne{-}Marie Kermarrec and
                  Christos Kozyrakis},
  title        = {Narwhal and Tusk: a DAG-based mempool and efficient {BFT} consensus},
  booktitle    = {EuroSys '22: Seventeenth European Conference on Computer Systems,
                  Rennes, France, April 5 - 8, 2022},
  pages        = {34--50},
  publisher    = {{ACM}},
  year         = {2022}
}

@misc{bullshark-ps,
      title={Bullshark: The Partially Synchronous Version},
      author={Alexander Spiegelman and Neil Giridharan and Alberto Sonnino and Lefteris Kokoris-Kogias},
      year={2022},
      eprint={2209.05633},
      archivePrefix={arXiv},
      primaryClass={cs.DC},
      url={https://arxiv.org/abs/2209.05633},
}

@article{cordialminers,
  author       = {Idit Keidar and
                  Oded Naor and
                  Ehud Shapiro},
  title        = {Cordial Miners: {A} Family of Simple, Efficient and Self-Contained
                  Consensus Protocols for Every Eventuality},
  journal      = {CoRR},
  volume       = {abs/2205.09174},
  year         = {2022},
  url          = {https://doi.org/10.48550/arXiv.2205.09174}
}

@misc{mysticeti-tr,
      title={Mysticeti: Reaching the Limits of Latency with Uncertified DAGs},
      author={Kushal Babel and Andrey Chursin and George Danezis and Anastasios Kichidis and Lefteris Kokoris-Kogias and Arun Koshy and Alberto Sonnino and Mingwei Tian},
      year={2024},
      eprint={2310.14821},
      archivePrefix={arXiv},
      primaryClass={cs.DC},
      url={https://arxiv.org/abs/2310.14821},
}

@inproceedings{aleph,
  author       = {Adam Gagol and
                  Damian Lesniak and
                  Damian Straszak and
                  Michal Swietek},
  title        = {Aleph: Efficient Atomic Broadcast in Asynchronous Networks with Byzantine Nodes},
  booktitle    = {Proceedings of the 1st {ACM} Conference on Advances in Financial Technologies,
                  {AFT} 2019, Zurich, Switzerland, October 21-23, 2019},
  pages        = {214--228},
  publisher    = {{ACM}},
  year         = {2019}
}

@inproceedings{sui,
  author       = {Sam Blackshear and
                  Andrey Chursin and
                  George Danezis and
                  Anastasios Kichidis and
                  Lefteris Kokoris{-}Kogias and
                  Xun Li and
                  Mark Logan and
                  Ashok Menon and
                  Todd Nowacki and
                  Alberto Sonnino and
                  Brandon Williams and
                  Lu Zhang},
  editor       = {Bo Luo and
                  Xiaojing Liao and
                  Jun Xu and
                  Engin Kirda and
                  David Lie},
  title        = {Sui Lutris: {A} Blockchain Combining Broadcast and Consensus},
  booktitle    = {Proceedings of the 2024 on {ACM} {SIGSAC} Conference on Computer and
                  Communications Security, {CCS} 2024, Salt Lake City, UT, USA, October
                  14-18, 2024},
  pages        = {2606--2620},
  publisher    = {{ACM}},
  year         = {2024}
}

@inproceedings{shoal,
  author       = {Alexander Spiegelman and
                  Balaji Arun and
                  Rati Gelashvili and
                  Zekun Li},
  editor       = {Jeremy Clark and
                  Elaine Shi},
  title        = {Shoal: Improving {DAG-BFT} Latency and Robustness},
  booktitle    = {Financial Cryptography and Data Security - 28th International Conference,
                  {FC} 2024, Willemstad, Cura{\c{c}}ao, March 4-8, 2024, Revised Selected
                  Papers, Part {I}},
  series       = {Lecture Notes in Computer Science},
  volume       = {14744},
  pages        = {92--109},
  publisher    = {Springer},
  year         = {2024}
}

@article{shoalplusplus,
  author       = {Balaji Arun and
                  Zekun Li and
                  Florian Suri{-}Payer and
                  Sourav Das and
                  Alexander Spiegelman},
  title        = {Shoal++: High Throughput {DAG} {BFT} Can Be Fast!},
  journal      = {CoRR},
  volume       = {abs/2405.20488},
  year         = {2024}
}

@article{mahimahi,
  author       = {Philipp Jovanovic and
                  Lefteris Kokoris{-}Kogias and
                  Bryan Kumara and
                  Alberto Sonnino and
                  Pasindu Tennage and
                  Igor Zablotchi},
  title        = {Mahi-Mahi: Low-Latency Asynchronous {BFT} DAG-Based Consensus},
  journal      = {CoRR},
  volume       = {abs/2410.08670},
  year         = {2024}
}

@inproceedings{bab,
  author       = {Christian Cachin and
                  Klaus Kursawe and
                  Frank Petzold and
                  Victor Shoup},
  editor       = {Joe Kilian},
  title        = {Secure and Efficient Asynchronous Broadcast Protocols},
  booktitle    = {{CRYPTO} 2001},
  series       = {Lecture Notes in Computer Science},
  volume       = {2139},
  pages        = {524--541},
  publisher    = {Springer},
  year         = {2001}
}

@incollection{HT93,
  Author = "Hadzilacos, Vassos and Toueg, Sam",
  Title = "Fault-tolerant broadcasts and related problems",
  Editor = "Mullender, Sape J.",
  booktitle = "Distributed Systems",
  Publisher = "Addison-Wesley",
  Year = "1993",
  Chapter = 5,
  Pages = {97--145}
  }

@inproceedings{hyperledger,
  author       = {Elli Androulaki and
                  Artem Barger and
                  Vita Bortnikov and
                  Christian Cachin and
                  Konstantinos Christidis and
                  Angelo De Caro and
                  David Enyeart and
                  Christopher Ferris and
                  Gennady Laventman and
                  Yacov Manevich and
                  Srinivasan Muralidharan and
                  Chet Murthy and
                  Binh Nguyen and
                  Manish Sethi and
                  Gari Singh and
                  Keith Smith and
                  Alessandro Sorniotti and
                  Chrysoula Stathakopoulou and
                  Marko Vukolic and
                  Sharon Weed Cocco and
                  Jason Yellick},
  editor       = {Rui Oliveira and
                  Pascal Felber and
                  Y. Charlie Hu},
  title        = {Hyperledger fabric: a distributed operating system for permissioned
                  blockchains},
  booktitle    = {Proceedings of the Thirteenth EuroSys Conference, EuroSys 2018, Porto,
                  Portugal, April 23-26, 2018},
  pages        = {30:1--30:15},
  publisher    = {{ACM}},
  year         = {2018}
}

@inproceedings{hotstuff,
  author       = {Maofan Yin and
                  Dahlia Malkhi and
                  Michael K. Reiter and
                  Guy Golan{-}Gueta and
                  Ittai Abraham},
  editor       = {Peter Robinson and
                  Faith Ellen},
  title        = {HotStuff: {BFT} Consensus with Linearity and Responsiveness},
  booktitle    = {{PODC}},
  pages        = {347--356},
  publisher    = {{ACM}},
  year         = {2019}
}

@article{abraham2025asynchronous,
  title={Asynchronous Algorand: Reaching Agreement with Near Linear Communication and Constant Expected Time},
  author={Abraham, Ittai and Chouatt, Eli and Damg{\aa}rd, Ivan and Gilad, Yossi and Stern, Gilad and Yakoubov, Sophia},
  journal={Cryptology ePrint Archive},
  year={2025}
}

@inproceedings{blum2020asynchronous,
  title={Asynchronous byzantine agreement with subquadratic communication},
  author={Blum, Erica and Katz, Jonathan and Liu-Zhang, Chen-Da and Loss, Julian},
  booktitle={Theory of Cryptography: 18th International Conference, TCC 2020, Durham, NC, USA, November 16--19, 2020, Proceedings, Part I 18},
  pages={353--380},
  year={2020},
  organization={Springer}
}

@inproceedings{bls-multisignatures,
  title={Compact multi-signatures for smaller blockchains},
  author={Boneh, Dan and Drijvers, Manu and Neven, Gregory},
  booktitle={International Conference on the Theory and Application of Cryptology and Information Security},
  pages={435--464},
  year={2018},
  organization={Springer}
}

@article{multisignatures,
  title={A public-key cryptosystem suitable for digital multisignature},
  author={Itakura, Kazuharu},
  journal={NEC research and development},
  volume={71},
  pages={1--8},
  year={1983}
}

@inproceedings{rumor-spreading-multicall,
  title={Faster rumor spreading with multiple calls},
  author={Panagiotou, Konstantinos and Pourmiri, Ali and Sauerwald, Thomas},
  booktitle={International Symposium on Algorithms and Computation},
  pages={446--456},
  year={2013},
  organization={Springer}
}

@inproceedings{rumor-spreading-revisited,
  title={Randomized rumor spreading revisited},
  author={Doerr, Benjamin and Kostrygin, Anatolii},
  booktitle={44th International Colloquium on Automata, Languages, and Programming (ICALP 2017)},
  pages={138--1},
  year={2017},
  organization={Schloss Dagstuhl--Leibniz-Zentrum f{\"u}r Informatik}
}

@article{rumor-spreading,
  title={On spreading a rumor},
  author={Pittel, Boris},
  journal={SIAM Journal on Applied Mathematics},
  volume={47},
  number={1},
  pages={213--223},
  year={1987},
  publisher={SIAM}
}

@inproceedings{Alistarh2010,
  author = {Alistarh, Dan and Gilbert, Seth and Guerraoui, Rachid and Zadimoghaddam, Morteza},
  title = {How Efficient Can Gossip Be? (on the Cost of Resilient Information Exchange)},
  booktitle = {Proceedings of the 37th International Colloquium Conference on Automata, Languages and Programming: Part II},
  series = {ICALP'10},
  year = {2010},
  isbn = {3-642-14161-7, 978-3-642-14161-4},
  location = {Bordeaux, France},
  pages = {115--126},
  numpages = {12},
  url = {http://dl.acm.org/citation.cfm?id=1880999.1881012},
  acmid = {1881012},
  publisher = {Springer-Verlag},
  address = {Berlin, Heidelberg},
}

@inproceedings{Berenbrink2008,
 author = {Berenbrink, Petra and Elsaesser, Robert and Friedetzky, Tom},
 title = {Efficient Randomised Broadcasting in Random Regular Networks with Applications in Peer-to-peer Systems},
 booktitle = {Proceedings of the Twenty-seventh ACM Symposium on Principles of Distributed Computing},
 series = {PODC '08},
 year = {2008},
 isbn = {978-1-59593-989-0},
 location = {Toronto, Canada},
 pages = {155--164},
 numpages = {10},
 url = {http://doi.acm.org/10.1145/1400751.1400773},
 doi = {10.1145/1400751.1400773},
 acmid = {1400773},
 publisher = {ACM},
 address = {New York, NY, USA},
 keywords = {broadcasting, random regular graphs},
}

@article{Elssser2015OnTI,
  title={On the Influence of Graph Density on Randomized Gossiping},
  author={Robert Els{\"a}sser and Dominik Kaaser},
  journal={2015 IEEE International Parallel and Distributed Processing Symposium},
  year={2015},
  pages={521-531}
}

@article{Fernandess2007,
 author = {Fernandess, Yaacov and Fern\'{a}ndez, Antonio and Monod, Maxime},
 title = {A Generic Theoretical Framework for Modeling Gossip-based Algorithms},
 journal = {SIGOPS Oper. Syst. Rev.},
 issue_date = {October 2007},
 volume = {41},
 number = {5},
 month = oct,
 year = {2007},
 issn = {0163-5980},
 pages = {19--27},
 numpages = {9}
}

@inproceedings{Giakkoupis2016,
 author = {Giakkoupis, George and Nazari, Yasamin and Woelfel, Philipp},
 title = {How Asynchrony Affects Rumor Spreading Time},
 booktitle = {Proceedings of the 2016 ACM Symposium on Principles of Distributed Computing},
 series = {PODC '16},
 year = {2016},
 isbn = {978-1-4503-3964-3},
 location = {Chicago, Illinois, USA},
 pages = {185--194},
 numpages = {10}
}

@article{Sourav2018SlowLF,
  title={Slow Links, Fast Links, and the Cost of Gossip},
  author={Suman Sourav and Peter Robinson and Seth Gilbert},
  journal={2018 IEEE 38th International Conference on Distributed Computing Systems (ICDCS)},
  year={2018},
  pages={786-796}
}

@article{dolev-reischuk-bound,
author = {Dolev, Danny and Reischuk, R\"{u}diger},
title = {Bounds on information exchange for Byzantine agreement},
year = {1985},
issue_date = {Jan. 1985},
publisher = {Association for Computing Machinery},
address = {New York, NY, USA},
volume = {32},
number = {1},
issn = {0004-5411},
url = {https://doi.org/10.1145/2455.214112},
doi = {10.1145/2455.214112},
abstract = {Byzantine Agreement has become increasingly important in establishing distributed properties when errors may exist in the systems. Recent polynomial algorithms for reaching Byzantine Agreement provide us with feasible solutions for obtaining coordination and synchronization in distributed systems. In this paper the amount of information exchange necessary to ensure Byzantine Agreement is studied. This is measured by the total number of messages the participating processors have to send in the worst case. In algorithms that use a signature scheme, the number of signatures appended to messages are also counted.First it is shown that Ω(nt) is a lower bound for the number of signatures for any algorithm using authentication, where n denotes the number of processors and t the upper bound on the number of faults the algorithm is supposed to handle. For algorithms that reach Byzantine Agreement without using authentication this is even a lower bound for the total number of messages. If n is large compared to t, these bounds match the upper bounds from previously known algorithms. For the number of messages in the authenticated case we prove the lower bound Ω(n + t2). Finally algorithms that achieve this bound are presented.},
journal = {J. ACM},
month = jan,
pages = {191–204},
numpages = {14}
}

@inproceedings{das2023threshold,
  title={Threshold signatures from inner product argument: Succinct, weighted, and multi-threshold},
  author={Das, Sourav and Camacho, Philippe and Xiang, Zhuolun and Nieto, Javier and B{\"u}nz, Benedikt and Ren, Ling},
  booktitle={Proceedings of the 2023 ACM SIGSAC Conference on Computer and Communications Security},
  pages={356--370},
  year={2023}
}

@inproceedings{threshold-cryptosystems,
  title={Threshold cryptosystems},
  author={Desmedt, Yvo},
  booktitle={International Workshop on the Theory and Application of Cryptographic Techniques},
  pages={1--14},
  year={1992},
  organization={Springer}
}

@inproceedings{bls-threshold,
  title={Threshold signatures, multisignatures and blind signatures based on the gap-Diffie-Hellman-group signature scheme},
  author={Boldyreva, Alexandra},
  booktitle={International Workshop on Public Key Cryptography},
  pages={31--46},
  year={2002},
  organization={Springer}
}

@InProceedings{kzg,
author="Kate, Aniket
and Zaverucha, Gregory M.
and Goldberg, Ian",
editor="Abe, Masayuki",
title="Constant-Size Commitments to Polynomials and Their Applications",
booktitle="Advances in Cryptology - ASIACRYPT 2010",
year="2010",
publisher="Springer Berlin Heidelberg",
address="Berlin, Heidelberg",
pages="177--194",
abstract="We introduce and formally define polynomial commitment schemes, and provide two efficient constructions. A polynomial commitment scheme allows a committer to commit to a polynomial with a short string that can be used by a verifier to confirm claimed evaluations of the committed polynomial. Although the homomorphic commitment schemes in the literature can be used to achieve this goal, the sizes of their commitments are linear in the degree of the committed polynomial. On the other hand, polynomial commitments in our schemes are of constant size (single elements). The overhead of opening a commitment is also constant; even opening multiple evaluations requires only a constant amount of communication overhead. Therefore, our schemes are useful tools to reduce the communication cost in cryptographic protocols. On that front, we apply our polynomial commitment schemes to four problems in cryptography: verifiable secret sharing, zero-knowledge sets, credentials and content extraction signatures.",
isbn="978-3-642-17373-8"
}

\appendix

\section{Correctness proofs}
\label{app:proofs}

We say that the local DAG view of user $p_i$, denoted $\dagvar_i$, is \emph{menacing} if there exists a round $r$ with an anchor $v$ and a round $r' > r$ with an anchor $v'$ such that:
\begin{enumerate}
  \item $v$ and $v'$ are both present in $\dagvar_i$;
  \item $v$ has been directly committed by $p_i$, i.e. $|\{u \in \dagvar_i[r + 1]~|~v \in u.\vEdges\}| \geq 2f + 1$;
  \item $v$ is not committed indirectly when committing $v'$, i.e., there is no path from $v'$ to $v$ in $\dagvar_i$.
\end{enumerate}

\begin{lemma}
\label{lemma:menace}
If no correct user ever has a menacing DAG, then the algorithm satisfies the \textbf{Total order} property.
\end{lemma}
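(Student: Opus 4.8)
The plan is to reduce \textbf{Total order} to the statement that all correct users commit anchors in a consistent order, and then let causal closure and the deterministic per-anchor ordering rule finish the job. First I would record the standard consequences of disseminating vertices by reliable broadcast: the local DAGs converge (any vertex, together with its whole causal past, that enters one correct user's DAG eventually enters every correct user's DAG); each pair $(\text{source},\text{round})$ carries at most one vertex, so $\GetAnchor(r)$ is a well-defined partial function on which all correct users agree once it is defined; a vertex is inserted into $\dagvar_i$ only after all its parents, so every $\dagvar_i$ is \emph{causally closed}; and consequently the relation ``$u$ is reachable from $v$'' is an \emph{intrinsic} property of the pair $u,v$ --- it holds in $\dagvar_i$ precisely when $v\in\dagvar_i$ (and then the whole path, which lies in $v$'s causal past, is there too). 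I would also check by inspection of Algorithms~\ref{ordering} and \ref{esbs} the routine invariants that $\lastOrderedRound$ is non-decreasing, that each correct user commits anchors in strictly increasing round order, and that $\OrderHistory$ orders each vertex at most once (via $\orderedVertices$). With these in hand, \textbf{Total order} reduces to: every correct user's sequence of committed anchors is a prefix of one common sequence $A_{g_1},A_{g_2},\dots$, and for each $k$ the vertices ordered while committing $A_{g_k}$ are the same at all users.

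The heart of the argument is to recover the Bullshark safety invariant from the hypothesis. Call an anchor \emph{good} if it belongs to the intrinsic causal past of every anchor of a larger round. I claim that every anchor ever \emph{directly committed} by a correct user --- i.e.\ whose round-$r$ anchor $A_r$ ever satisfies condition~2 of the menacing definition in that user's view --- is good. Indeed, if $\dagvar_i$ satisfies condition~2 for $A_r$, then $\dagvar_i$ contains $2f+1$ round-$(r{+}1)$ vertices referencing $A_r$; by convergence those vertices eventually appear in every correct $\dagvar_j$, so $A_r$ eventually becomes directly committable in $\dagvar_j$. Now take any anchor $A_{r'}$ with $r'>r$ that exists at all; it lies in some correct user's DAG, hence eventually in the DAG of a correct user in whose view $A_r$ is already directly committable. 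Since that DAG is never menacing, condition~3 must fail there, i.e.\ there is a path $A_{r'}\to A_r$; by causal closure $A_r$ is then in $A_{r'}$'s intrinsic past. I would also note that whenever a correct user actually invokes $\OrderAnchors$ on a top-level anchor, condition~2 holds for that anchor in its view (immediate from the commit rule plus causal closure), so top-level anchors are always good.

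Next I would push this through the commit procedure. When a correct user runs $\OrderAnchors$ with top-level anchor at round $\rho$, the backward walk examines rounds $\rho-2,\rho-4,\dots$ down to $\lastOrderedRound$; at the step examining a round $s$ the current anchor sits at some round $>s$, so if $A_s$ is good then $A_s$ is in the current anchor's past and the reachability test succeeds --- hence the walk commits every good anchor in its range and ``re-anchors'' at it. The only remaining freedom is which \emph{non}-good anchors get pushed, and that is decided purely by the intrinsic reachability relation among anchors, which by the first paragraph is identical in every correct user's view; since the walk passes through (and resets its current anchor at) every good anchor it encounters, the non-good anchors committed inside each interval between consecutive good anchors are the same for everyone. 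As the top-level anchors are good and hence committed by all correct users, the union of anchors committed by any correct user grows to the same set in the same (round) order, so each user's committed-anchor sequence is a prefix of a single common sequence $A_{g_1},A_{g_2},\dots$. Finally, when $A_{g_k}$ is committed, $\OrderHistory$ outputs exactly the vertices of $A_{g_k}$'s intrinsic past not yet ordered; by causal closure all users see the same past, by the prefix property they have ordered the same vertices before, and the tie-break is a fixed deterministic rule, so all users emit the same vertex at each position, which is \textbf{Total order}.

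The delicate point, which I would be most careful about, is the interplay between ``eventually'' and ``at commit time'': the $2f+1$ witnesses that make $A_r$ directly committable in $p_j$'s view, or the path $A_{r'}\to A_r$, need not be present in $\dagvar_j$ at the instant $p_j$ performs the relevant walk. The resolution is exactly the intrinsic-reachability observation from the first paragraph --- once the never-menacing hypothesis (invoked at whatever later time the witness configuration materializes) tells us the path exists at all, causal closure retrofits it into every DAG that already contains $A_{r'}$, so it was in fact present at commit time and the walk did see it. A secondary nuisance is the degenerate case where $\TryCommitting$ calls $\OrderAnchors$ on an anchor of round $\le\lastOrderedRound$: such a call orders nothing new (that anchor, being directly committed, is good, hence already committed, hence its past is already ordered), so it is harmless --- or one simply assumes the natural guard that excludes it and keeps $\lastOrderedRound$ a non-decreasing maximum.
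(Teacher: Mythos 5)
Your proof is correct and follows essentially the same route as the paper's: use the non-menacing hypothesis to show that every directly committed anchor lies in the causal past of all later anchors, conclude that the committed-anchor sequences of correct validators are prefixes of one another, and let the deterministic per-anchor history ordering finish. You supply considerably more detail than the paper does (intrinsic reachability under causal closure, agreement on the indirectly committed anchors between consecutive directly committed ones, and the ``eventually'' versus ``at commit time'' issue), but these are elaborations of the same argument rather than a different approach.
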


\begin{proof}
First we show that in every run, the correct validators agree on the sequence of committed anchors.

Suppose a validator $p_i$ committed anchor $v$ in round $r$ using the direct commit rule.
As we assume that no DAG is menacing, in $\textit{DAG}_i$ for any anchor $v'$ in any round $r' > r$ there is a path from $v'$ to $v$ and, thus, when any correct validator commits $v'$, it will also commit $v$ by the direct or indirect commit rule.
Thus, if some correct validator commits an anchor via the direct commit rule, all correct validators will commit that anchor as well.
Hence, any two sequences of anchors committed by correct validators are related by containment: one of them is a prefix of the other.

Because we order causal histories in a deterministic order when committing anchors, if two sequences of anchors are equal then corresponding fully ordered DAG histories are also the same.
Thus, the \textbf{Total order} property is satisfied if no correct user ever has a menacing DAG.
\end{proof}

\begin{lemma}
\label{lem:fixed-pair-of-anchors}
Consider a directly committed anchor node $v$ from round $r$ and any anchor node $v'$ from round $r' > r$.
With high probability, $v'$ has a path to $v$.
\end{lemma}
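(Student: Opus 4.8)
The plan is to reduce the statement to a purely combinatorial question about random samples in a single wave, and then handle the grinding attack with a union bound over the polynomially many choices the adversary can make. First I would set up the key reachability claim: if for every round $s$ with $r \le s < r'$ there are ``enough'' correct nodes in round $s+1$ that have a path to the anchor $v$ of round $r$, then $v'$ reaches $v$ as well. Concretely, let me call a correct round-$s$ node \emph{good} if it has a path to $v$. The base case $s = r+1$ follows from the direct commit rule with threshold $2f+1$: at least $2f+1$ nodes in round $r+1$ reference $v$, and at least $f+1$ of those are correct, so there are $\ge f+1$ good correct nodes in round $r+1$. For the inductive step I would show that if there are $\ge f+1$ good correct nodes in round $s$, then with high probability there are again $\ge f+1$ good correct nodes in round $s+1$: each correct round-$(s+1)$ node samples $\SampleSize$ parents uniformly at random from a quorum of $\ge 2f+1$ round-$s$ nodes it collected, and that quorum contains at least $\dots$ — wait, here is the subtlety — the quorum of $2f+1$ nodes collected by a correct node might avoid the good correct nodes entirely if there are only $f+1$ of them and the adversary controls which quorum is formed via which nodes are delivered first. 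So the clean inductive invariant should instead be: at least $f+1$ correct nodes in each round are good, \emph{and} every correct node (good or not) has the property that any quorum it collects contains $\ge 1$ good correct node — which holds automatically because a quorum of $2f+1$ out of $n = 3f+1$ nodes must intersect any set of $f+1$ nodes.

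So the core probabilistic step is: fix a correct round-$(s+1)$ node $w$; its parent-candidate set $C_w$ (the $\ge 2f+1$ nodes it collected) contains at least one good correct round-$s$ node; $w$ picks $\SampleSize$ of the $|C_w|$ candidates uniformly at random (seeded by the multi-signature hash modeled as a random oracle), so the probability that $w$'s sample misses \emph{all} good correct nodes in $C_w$ is at most $\bigl(1 - \tfrac{1}{2f+1}\bigr)^{\SampleSize}$ — actually I want to be more careful and use that there are at least $f+1$ good correct nodes overall and $|C_w| \le n$, giving a miss probability at most $\bigl(1 - \tfrac{f+1}{n}\bigr)^{\SampleSize} \le \bigl(\tfrac{2}{3}\bigr)^{\SampleSize}$ roughly. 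If $w$'s sample hits a good correct node, then $w$ is good. By a Chernoff/union bound over the $\ge f+1$ correct nodes in round $s+1$, the probability that fewer than $f+1$ of them end up good is exponentially small in $\SampleSize = \Omega(\lambda)$, hence negligible. Chaining this over the at most $poly(\lambda)$ rounds between $r$ and $r'$ keeps the total failure probability negligible.

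The remaining obstacle, and the one I expect to be the main difficulty, is the \emph{grinding attack}: a Byzantine node in round $s+1$ that has received more than $n-f$ round-$s$ nodes can try many different quorums of size $n-f$, each producing a different multi-signature and hence a different seed and sample, and pick whichever one is least favorable (e.g.\ one whose sample avoids good correct nodes). Two points make this tolerable. First, a Byzantine node's choice only matters for the \emph{direct commit} step if many Byzantine nodes conspire, but the invariant only ever counts \emph{correct} good nodes, so Byzantine grinding cannot reduce the count of good correct nodes in later rounds — the sampling performed by correct nodes is outside the adversary's control because a correct node in round $s+1$ uses the exact quorum it itself collected, and the random-oracle seed it derives is unpredictable to the adversary before that quorum is fixed. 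Second, for the finitely many ``critical'' events that the adversary \emph{can} influence (namely whether some particular future anchor node, which may be Byzantine-authored, reaches $v$), the adversary has at most $\binom{n}{n-f} \le 2^n$ candidate quorums per node, but since it is polynomially bounded it actually queries the random oracle at most $poly(\lambda)$ times, so a union bound over all its oracle queries against the event ``this seed yields a sample missing all good correct nodes in that quorum'' (each of probability $\le (2/3)^{\SampleSize}$) still gives a negligible bound once $\SampleSize = \Omega(\lambda)$ with a large enough constant. I would formalize this exactly as in the random-oracle argument sketched after the protocol description, counting adversary oracle queries rather than all possible quorums, and conclude that with all but negligible probability $v'$ has a path to $v$.
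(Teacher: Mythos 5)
There is a genuine gap in the inductive step, and it is precisely at the point you flagged as "the subtlety" and then resolved incorrectly. Your invariant is that each round contains at least $f+1$ good correct nodes. From this you can only conclude that the quorum $C_w$ collected by a correct round-$(s+1)$ node contains at least \emph{one} good node (since $C_w$ omits at most $n-(2f+1)=f$ nodes). But a uniform sample of $\SampleSize$ parents out of $|C_w|\ge 2f+1$ candidates misses a single designated element with probability $1-\SampleSize/|C_w|$, which is close to $1$ when $\SampleSize=O(\lambda)\ll n$ --- not exponentially small. Your "more careful" bound $\bigl(1-\tfrac{f+1}{n}\bigr)^{\SampleSize}$ is computed as if the sample were drawn from the full round-$s$ node set containing all $f+1$ good nodes; in fact it is drawn only from $C_w$, and only the good nodes \emph{inside} $C_w$ help, of which there may be just one. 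So the induction does not go through, and the chain of rounds between $r$ and $r'$ is never established. (The step could be repaired by strengthening the invariant to "all $2f+1$ correct round-$s$ nodes are good," since then every quorum contains $\ge f+1$ good nodes, a majority; but you would then need a union bound over all correct nodes in all $R$ rounds, and you did not do this.)

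The paper avoids the induction entirely, and this is the idea your proposal is missing. Because the direct-commit threshold is $2f+1$ (not $f+1$), the set $V_1$ of round-$(r+1)$ nodes that reference $v$ has size $2f+1$, so \emph{any} quorum $V_2$ of $2f+1$ round-$(r+1)$ nodes satisfies $|V_2\cap V_1|\ge f+1 > |V_2|/2$. Hence a single sampling step --- performed by any round-$(r+2)$ node $u$ --- hits $V_1$ except with probability $\binom{f}{\SampleSize}/\binom{2f+1}{\SampleSize}\le 2^{-\SampleSize}$. Getting from $v'$ down to some node $u$ in round $r+2$ is deterministic (every node has at least one parent in the previous round), so the whole argument consists of one probabilistic hop from round $r+2$ to $V_1$ followed by the certain edge from $V_1$ to $v$; no propagation of "goodness" through intermediate rounds is needed. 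Your treatment of the grinding attack (union bound over the adversary's polynomially many random-oracle queries) matches the paper's and is fine, but it cannot rescue the broken inductive step.
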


\begin{proof}
Since $v$ has been directly committed by some correct user in round $r$, there exists a set $V_1$ of $2f + 1$ nodes in round $r + 1$ that have edges to $v$.
Let $\overline{V_1}$ be the set of nodes in round $r + 1$ that do \emph{not} reference $v$.
By the properties of reliable broadcast, each validator, whether correct or Byzantine, can produce at most one node in each round.
Hence, $|\overline{V_1}| \le n - |V_1| \le n - (2f + 1)$.
For $n = 3f + 1$, $|\overline{V_1}| \le f$.

Consider any vertex $u$ in round $r + 2$ that is reachable from $v'$ (it must exist, as $r' \geq r + 2$ and there are at least $\SampleSize \ge 1$ edges from each vertex).
We show that, with high probability, $u$ has an edge to some node in $V_1$.
This would imply that there is a path from $u$ to $v$, and, hence, from $v'$ to $v$.

We now show that with high probability at least one vertex from $V_1$ will be included in the sample of outgoing edges of $u$.

First phase of sampling performed upon creation of $u$ is the vector commitment to all the certificates received by the corresponding validator in round $r + 1$.
Despite the fact that a Byzantine validator might not actually commit to all of the received certificates, usage of this commitment as a seed for ALBA guarantees that the further sampling procedure actually depends on certificates of all potential predecessors of $u$.
Thus, the adversary can not choose which vertices will reference the anchor based on which vertices have been selected to the sample, ensuring that any randomness occuring in the sampling is independent of the actual composition of $V_1$.

If all the sampled vertices are contained in $\overline{V_1}$, then $u$ contains a valid ALBA proof of sampling from a set of size less than or equal to $f$.
By the properties of ALBA, the probability of the adversary constructing such a proof is negligible.
Thus, with high probability the adversary is not able to equivocate and avoid sampling any vertices from $V_1$.
\end{proof}

\begin{lemma}
\label{lem:all-pairs-of-anchors}
The probability of any correct validator having a menacing $\dagvar_i$ in an execution of the protocol of polynomial length $R$ is negligible.
\end{lemma}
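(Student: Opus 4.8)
The plan is to obtain \Cref{lem:all-pairs-of-anchors} from \Cref{lem:fixed-pair-of-anchors} by a union bound over all pairs of anchor rounds that can arise in an execution of length $R$. First I would unfold the definition of a menacing view: a correct validator $p_i$ has a menacing $\dagvar_i$ only if there exist rounds $r < r' \le R$ whose anchors $v, v'$ are both present in $\dagvar_i$, with $v$ directly committed by $p_i$ (so $|\{u \in \dagvar_i[r+1] \mid v \in u.\vEdges\}| \ge 2f+1$) and with no path from $v'$ to $v$. For any such concrete pair, the direct-commit condition supplies exactly the set $V_1$ of $2f+1$ round-$(r+1)$ nodes referencing $v$ that \Cref{lem:fixed-pair-of-anchors} starts from, and ``no path from $v'$ to $v$'' is precisely the bad event that lemma bounds. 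Since both the direct-commit condition and the path relation are monotone properties of the reliably-broadcast node structure (on which correct validators agree), it suffices to bound, for each ordered pair $(r, r')$ with $r < r' \le R$, the probability that $v$ is directly committed yet $v'$ has no path to $v$; the failure event for $p_i$ is contained in the union of these events over all pairs.

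Next I would instantiate \Cref{lem:fixed-pair-of-anchors} for each pair. The anchor of a round is fixed by the round-robin rule, so the only adversarial/random choices are which nodes appear in the DAG and which parents they sample; the lemma already bounds the worst case over these, including grinding, by $poly(\lambda) \cdot 2^{-\SampleSize}$. There are fewer than $R^2$ ordered pairs of rounds and at most $n$ correct validators, and both $R$ and $n$ are polynomial in $\lambda$. A union bound over all pairs and all correct validators then gives that the probability of any correct validator ever holding a menacing $\dagvar_i$ is at most $n \cdot R^2 \cdot poly(\lambda) \cdot 2^{-\SampleSize} = poly(\lambda) \cdot 2^{-\SampleSize}$, which is negligible as long as $\SampleSize = \Omega(\lambda)$. (Together with \Cref{lemma:menace}, this yields \textbf{Total order} with all but negligible probability.)

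The point that needs care is that this outer union bound must compose cleanly with the grinding-attack argument \emph{inside} \Cref{lem:fixed-pair-of-anchors}, which itself union-bounds over the polynomially many hash-oracle queries the adversary makes while attacking one round. Ranging additionally over all $O(R)$ relevant rounds amounts to a union bound over (round) $\times$ (hash query), but the total number of hash queries over the whole polynomial-length execution is itself polynomial in $\lambda$, so the combined bound remains $poly(\lambda) \cdot 2^{-\SampleSize}$. I would therefore state the argument in terms of a single global event --- ``for some round $r' \le R$, some $(\sigma, V_2)$ pair queried to $\Hash$ that is a valid multi-signature on round $r'{-}1$ yields a sample of $\SampleSize$ nodes disjoint from the corresponding $V_1$'' --- and bound it in one application of the union bound over all such queries, rather than literally multiplying two separately-obtained negligible factors; this is the step most likely to hide a subtlety and where I would be most careful.
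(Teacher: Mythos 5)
Your proposal is correct and follows essentially the same route as the paper, which likewise obtains the result by a union bound over the polynomially many pairs of anchors combined with \Cref{lem:fixed-pair-of-anchors}. Your additional care about merging the outer union bound with the inner one over hash-oracle queries is a sound elaboration of a point the paper leaves implicit, but it does not change the argument.
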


\begin{proof}
Follows from the union bound and \Cref{lem:fixed-pair-of-anchors} since a polynomial-length execution will have at most a polynomial number of pairs of anchors.
\end{proof}

\ThTO*

\begin{proof}
Follows directly from \Cref{lemma:menace,lem:all-pairs-of-anchors}
\end{proof}

\begin{lemma}
\label{lemma:anchor-reference}
If the value of the timeout is at least $2 \Delta$, after the GST all anchors created by correct validators will be committed.
\end{lemma}

\begin{proof}
We rely on the following property of reliable broadcast: if an honest party delivers a message at time $t$ then all honest parties deliver this message by time $t + \Delta$.
This property is guaranteed by any reliable broadcast protocol that echoes a certificate proving the message delivery before actually delivering it.

Suppose a validator $p_i$ broadcasts an anchor in round $r$.
Let $t$ be the moment of time when the first correct validator $p_j$ advanced to round $r$ and created its vertex.
It means that at time $t$ validator $p_j$ has collected a quorum of vertices from round $r - 1$.
Due to the property of reliable broadcast, by the time $t + \Delta$ all correct validators will collect the same quorum and advance to the round $r$.
That implies that all the validators including $p_i$ will advance to round $r$ withing the interval of time $[t, t + \Delta]$.
Thus, using the property of reliable broadcast once again, we conclude that all the vertices created in round $r$ will be delivered to all the correct validators withing the time interval of length $2 \Delta$.
If the timeout used in the DAG construction is at least $2 \Delta$, the anchor will be referenced by all the correct validators and thus committed using the direct rule.
\end{proof}

\ThVA*

\begin{proof}
If a message $m$ is delivered by a correct validator, it means that an anchor $v$ containing the corresponding DAG vertex in its causal history has been committed.
By \Cref{lemma:anchor-reference} after the GST all the anchors will be committed.
Because with high probability no correct validator ever has a menacing DAG (\Cref{lem:all-pairs-of-anchors}), commiting any anchor implies commiting all the anchors from previous rounds that have been directly committed by at least one correct validator.
Thus, after the GST all the correct validators will commit $v$ and thus deliver $m$.
\end{proof}

\end{document}